\newcommand{\p}{\partial}
\newtheorem{proposition}{Proposition}
\newcommand{\mathsym}[1]{}
\newcommand{\unicode}[1]{}
\begin{document}
\title{ Election Predictions as Martingales: An Arbitrage Approach}
\author{
 Nassim Nicholas Taleb\\
Tandon School of Engineering, New York University 
3rd Version, October 2017\\
Forthcoming, \textit{Quantitative Finance} }
\maketitle
\thispagestyle{fancy}

\section{Introduction}

A standard result in quantitative finance is that when the volatility of the underlying security increases, arbitrage pressures push the corresponding binary option to trade closer to 50\%, and become less variable over the remaining time to expiration. Counterintuitively, the higher the uncertainty of the underlying security, the lower the volatility of the binary option. This effect should hold in all domains where a binary price is produced -- yet we observe severe violations of these principles in many areas where binary forecasts are made, in particular those concerning the U.S. presidential election of 2016. We observe stark errors among political scientists and forecasters, for instance with 1) assessors giving the candidate D. Trump between 0.1\% and 3\% chances of success , 2) jumps in the revisions of forecasts from 48\% to 15\%, both made while invoking uncertainty.

Conventionally, the quality of election forecasting has been assessed statically by De Finetti's method, which consists in minimizing the Brier score, a metric of divergence from the final outcome (the standard for tracking the accuracy of probability assessors across domains, from elections to weather). No intertemporal evaluations of changes in estimates appear to have been imposed outside the quantitative finance practice and literature. Yet De Finetti's own principle is that a probability should be treated like a two-way "choice" price, 
which is thus violated by conventional practice.
 \begin{figure}[h!tb]
\begin{center}
	\includegraphics[width=0.9 \linewidth]{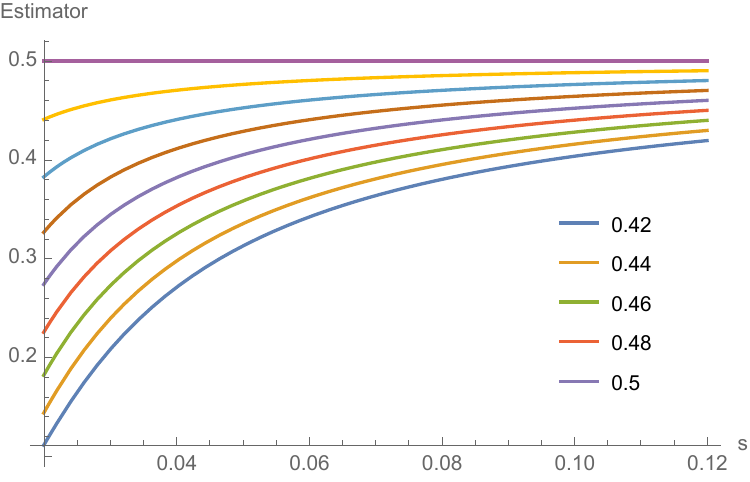}
	\caption{Election arbitrage "estimation" (i.e., valuation) at different expected proportional votes $Y\in [0,1]$, with $s$ the expected volatility of $Y$ between present and election results. We can see that under higher uncertainty, the estimation of the result gets closer to 0.5, and becomes insensitive to estimated electoral margin.}\label{graphmain}
	\end{center}
\end{figure}

\begin{figure}[h!tb]
\begin{center}
	\includegraphics[width=0.6 \linewidth]{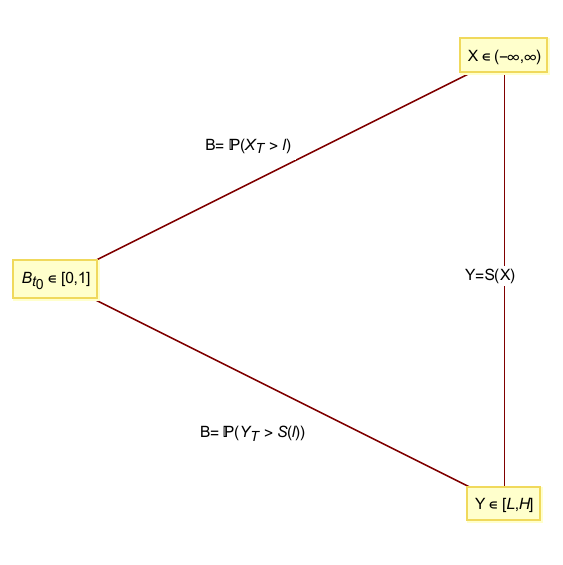}
	\caption{$X$ is an open non observable random variable (a shadow variable of sorts) on $\mathbb{R}$, $Y$, its mapping into "votes" or "electoral votes" via a sigmoidal function $S(.)$, which maps one-to-one, and the binary as the expected value of either using the proper corresponding distribution.}\label{scheme}
	\end{center}
\end{figure}

In this paper we take a dynamic, continuous-time approach based on the principles of quantitative finance and argue that a probabilistic estimate of an election outcome by a given "assessor" needs be treated like a tradable price, that is, as a binary option value subjected to arbitrage boundaries (particularly since binary options are actually used in betting markets). Future revised estimates need to be compatible with martingale pricing, otherwise intertemporal arbitrage is created, by "buying" and "selling" from the assessor.

A mathematical complication arises as we move to continuous time and apply the standard martingale approach: namely that as a probability forecast, the underlying security lives in $[0,1]$. Our approach is to create a dual (or "shadow") martingale process $Y$, in an interval $[L,H]$ from an arithmetic Brownian motion, $X$ in $(-\infty, \infty)$ and price elections accordingly. The dual process $Y$ can for example represent the numerical votes needed for success. A complication is that, because of the transformation from $X$ to $Y$, if $Y$ is a martingale, $X$ cannot be a martingale (and vice-versa). 

The process for $Y$ allows us to build an arbitrage relationship between the volatility of a probability estimate and that of the underlying variable, e.g. the vote number. Thus we are able to show that when there is a high uncertainty about the final outcome, 1) indeed, the arbitrage value of the forecast (as a binary option) gets closer to 50\% and 2) the estimate should not undergo large changes even if polls or other bases show significant variations.\footnote{A central property of our model is that it prevents $B(.)$ from varying more than the estimated $Y$: in a two candidate contest, it will be capped (floored) at $Y$ if lower (higher) than .5. In practice, we can observe probabilities of winning of 98\% vs. 02\% from a narrower spread of estimated votes of 47\% vs. 53\%; our approach prevents, under high uncertainty, the probabilities from diverging away from the estimated votes. But it remains conservative enough to not give a higher proportion.}

The pricing links are between 1) the binary option value (that is, the forecast probability), 2) the estimation of $Y$ and 3) the volatility of the estimation of $Y$ over the remaining time to expiration (see Figures \ref{graphmain} and \ref{scheme} ). 


\subsection{Main results}
For convenience, we start with our notation.
\subsubsection*{Notation} 

\begin{center}
\begin{tabular}{l p{0.8 \linewidth}}
$Y_0$ & the observed estimated proportion of votes expressed in $[0,1]$ at time $t_0$. These can be either popular or electoral votes, so long as one treats them with consistency.\\
$T$ &period when the irrevocable final election outcome $Y_T$ is revealed, or expiration.\\

$t_0$ &present evaluation period, hence $T-t_0$ is the time until the final election, expressed in years.\\

$s$ &annualized volatility of $Y$, or uncertainty attending outcomes for $Y$ in the remaining time until expiration. We assume $s$ is constant without any loss of generality --but it could be time dependent.\\

$B(.)$ &"forecast probability", or estimated continuous-time arbitrage evaluation of the election results, establishing arbitrage bounds between $B(.)$,  $Y_0$ and the volatility $s$.

\end{tabular}
\end{center}

\subsubsection*{Main results}

\begin{equation}
B(Y_0,\sigma,t_0,T)=\frac{1}{2} \text{erfc}\left(\frac{l-\text{erf}^{-1}(2 Y_0-1) e^{\sigma ^2 ( T-t_0)}}{\sqrt{e^{2 \sigma ^2 (T-t_0)}-1}}\right)\label{maineq},
\end{equation}
where 
\begin{equation}
\sigma \approx \frac{\sqrt{\log \left(2 \pi   s^2 e^{2 \text{erf}^{-1}(2 Y_0-1)^2}+1\right)}}{\sqrt{2} \sqrt{T-t_0}}	\label{eqvar},
\end{equation}
$l$ is the threshold needed (defaults to .5), and erfc(.) is the standard complementary error function, 1-erf(.), with $
\text{erf} (z)=\frac{2}{\sqrt{\pi }}\int _0^z  e^{-t^2} dt$. \qed

We find it appropriate here to answer the usual comment by statisticians and people operating outside of mathematical finance: "why not simply use a Beta-style distribution for $Y$?". The answer is that 1) the main purpose of the paper is establishing (arbitrage-free) time consistency in binary forecasts, and 2) we are not aware of a continuous time stochastic process that accommodates a beta distribution or a similarly bounded conventional one. 

\subsection{Organization}
The remaining parts of the paper are organized as follows. First, we show the process for $Y$ and the needed transformations from a specific Brownian motion. Second, we derive the arbitrage relationship used to obtain equation \eqref{maineq}.  Finally, we discuss De Finetti's approach and show how a martingale valuation relates to minimizing the conventional standard in the forecasting industry, namely the Brier Score.


 \subsubsection*{A comment on absence of closed form solutions for $\sigma$}
 We note that for $Y$ we lack a closed form solution for the integral reflecting the total variation: $\int_{t_0}^T \frac{\sigma  }{\sqrt{\pi }}e^{-\text{erf}^{-1}(2 y_s-1)^2} ds$, though the corresponding one for $X$ is computable. Accordingly, we have relied on propagation of uncertainty methods to obtain a closed form solution for the probability density of $Y$, though not explicitly its moments as the logistic normal integral does not lend itself to simple expansions \cite{pirjol2013logistic}.

\subsubsection*{Time slice distributions for $X$ and $Y$}
The time slice distribution is the probability density function of $Y$ from time $t$, that is the one-period representation, starting at $t$ with $y_0=\frac{1}{2}+\frac{1}{2}\text{erf}(x_0)$.  Inversely, for $X$ given $y_0$, the corresponding $x_0$, $X$ may be found to be normally distributed for the period $T-t_0$ with 

$$
\mathbb{E}(X,T)=X_0 e^{ \sigma ^2 (T-t_0)},
$$
$$\mathbb{V}(X,T)=\frac{e^{2  \sigma ^2 (T-t_0)}-1}{2 }$$
and a kurtosis of $3$.
By probability transformation we obtain  $\varphi$, the corresponding distribution of $Y$ with initial value $y_0$ is given by

\begin{dmath}
\varphi(y;y_0,T)=\frac{1}{\sqrt{e^{2 \sigma ^2 (t-t_0)}-1}}\exp \left\{\text{erf}^{-1}(2 y-1)^2-\frac{1}{2} \left(\coth \left(\sigma ^2 t\right)-1\right) \left(\text{erf}^{-1}(2
   y-1)-\text{erf}^{-1}(2 y_0-1) e^{\sigma ^2 (t-t_0)}\right)^2\right\}
\end{dmath}
and we have 
$\mathbb{E}(Y_t)=Y_0$.

As to the variance, $\mathbb{E}(Y^2)$, as mentioned above, does not lend itself to a closed-form solution derived from $\varphi(.)$, nor from the stochastic integral; but it can be easily estimated from the closed form distribution of $X$ using methods of propagation of uncertainty for the first two moments (the delta method).

Since the variance of a function $f$ of a finite moment random variable $X$  can be approximated as  $V\left(f(X)\right)= f'\left(\mathbb{E}(X)\right)^2 V(X)$:

$$\left. \frac{\partial S^{-1}(y)}{\partial y}\right|_{y=Y_0} s^2 \approx \frac{e^{2 \sigma^2 (T-t_0)}-1}{2 }$$

\begin{equation}
s \approx \sqrt{\frac{e^{-2 \text{erf}^{-1}(2 Y_0-1)^2} \left(e^{2 \sigma ^2 (T-t_0)}-1\right)}{2 \pi }}.
\end{equation}
Likewise for calculations in the opposite direction, we find
\begin{equation*}
\sigma \approx \frac{\sqrt{\log \left(2 \pi   s^2 e^{2 \text{erf}^{-1}(2 Y_0-1)^2}+1\right)}}{\sqrt{2} \sqrt{T-t_0}},	
\end{equation*}
which is  \eqref{eqvar} in the presentation of the main result.

Note that expansions including higher moments do not bring a material increase in precision -- although $s$ is highly nonlinear around the center, the range of values for the volatility of the total or, say, the electoral college is too low to affect  higher order terms in a significant way, in addition to the boundedness of the sigmoid-style transformations.

\begin{figure}
	\includegraphics[width=.99\linewidth]{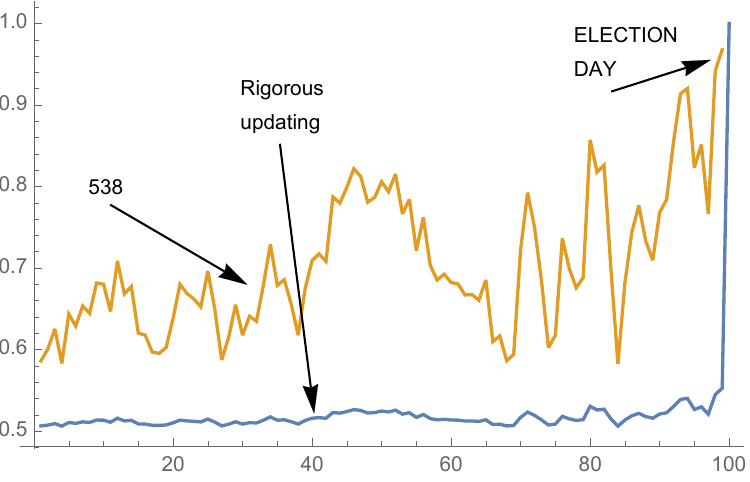}

\caption{Shows the estimation process cannot be in sync with the volatility of the estimation of (electoral or other) votes as it violates arbitrage boundaries}
\end{figure}

\subsection{ A Discussion on Risk Neutrality}
We apply risk neutral valuation, for lack of conviction regarding another way, as a default option. Although $Y$ may not necessarily be tradable, adding a risk premium for the process involved in determining the arbitrage valuation would necessarily imply a negative one for the other candidate(s), which is hard to justify. Further, option values or binary bets, need to satisfy a no Dutch Book argument (the De Finetti form of no-arbitrage) (see \cite{freedman2003notes}), i.e. properly priced binary options interpreted as probability forecasts give no betting "edge" in all outcomes without loss. Finally, any departure from risk neutrality would degrade the Brier Score (about which, below) as it would represent a diversion from the final forecast.

Also note the absence of the assumptions of financing rate usually present in financial discussions.

\section{The Bachelier-Style valuation}

Let $F(.)$ be a function of a variable $X$ satisfying 
\begin{equation}
dX_t= \sigma^2\, X_t  dt+\sigma\, dW_t.
\label{eq:ABM}
\end{equation}

We wish to show that $X$ has a simple Bachelier option price $B(.)$.
The idea of no arbitrage is that a continuously made
forecast must itself be a martingale. 

%

%
%
Applying It\^o's Lemma to $F \triangleq B$ for $X$ satisfying \eqref{eq:ABM}  yields

$$
dF = \left[    \sigma^2\,X\,\frac{\p F}{\p X} +   \frac12 \,\sigma^2\,\frac{\p^2 F}{\p X^2} +   \frac{\p F}{\p t}
\right]\,dt + \sigma\,\frac{\p F}{\p X}\,dW
$$
so that, since $\frac{\p F}{\p t} \triangleq 0$, $F$ must satisfy the partial differential equation
\begin{equation}
  \frac12 \,\sigma^2\,\frac{\p^2 F}{\p X^2} + \sigma^2\,X\,\frac{\p F}{\p X}+\frac{\p F}{\p t} =0,
\end{equation}
which is the driftless condition that makes $B$ a martingale.


For a binary (call) option, we have for terminal conditions $B(X,t) \triangleq F,F_{T}=\theta (x-l)$, where $\theta (.)$ is the Heaviside theta function and $l$ is the threshold:
$$\theta(x) := \begin{cases} 1, & x \ge l \\ 0, & x<l \end{cases}$$
with initial condition $x_0$ at time $t_0$ and terminal condition at $T$ given by:
\begin{equation*}
\frac{1}{2} \text{erfc}\left(\frac{x_0 e^{\sigma ^2 t}-l}{\sqrt{e^{2 \sigma ^2 t}-1}}\right)
\end{equation*}
which is, simply, the survival function of the Normal distribution parametrized under the process for $X$. 

Likewise we note from the earlier argument of one-to one (one can use Borel set arguments ) that $$\theta(y) := \begin{cases} 1, & y \ge S(l)\\ 0, & y<S(l), \end{cases}$$
so we can price the alternative process $B(Y,t)=\mathbb{P}(Y>\frac{1}{2})$ (or any other similarly obtained threshold $l$, by pricing $$B(Y_0,t_0)= \mathbb{P}(x>S^{-1}(l)).$$ 

The pricing from the proportion of votes is given by:
\begin{equation*}
B(Y_0,\sigma,t_0,T)=\frac{1}{2} \text{erfc}\left(\frac{l-\text{erf}^{-1}(2 Y_0-1) e^{\sigma ^2 (T-t_0)}}{\sqrt{e^{2 \sigma ^2 (T-t_0)}-1}}\right),
\end{equation*}
the main equation \eqref{maineq}, which can also be expressed less conveniently as

\begin{dmath*}
B(y_0,\sigma,t_0,T)=\frac{1}{\sqrt{e^{2 \sigma ^2 t}-1}}\int_l^1 \exp \left(\text{erf}^{-1}(2 y-1)^2-\frac{1}{2} \left(\coth \left(\sigma ^2 t\right)-1\right) \left(\text{erf}^{-1}(2 y-1)-\text{erf}^{-1}(2 y_0-1) e^{\text{$\sigma $}^2 t}\right)^2\right) \, dy
\end{dmath*}


\section{Bounded Dual Martingale Process}

\begin{figure}	
	\includegraphics[width=0.8 \linewidth]{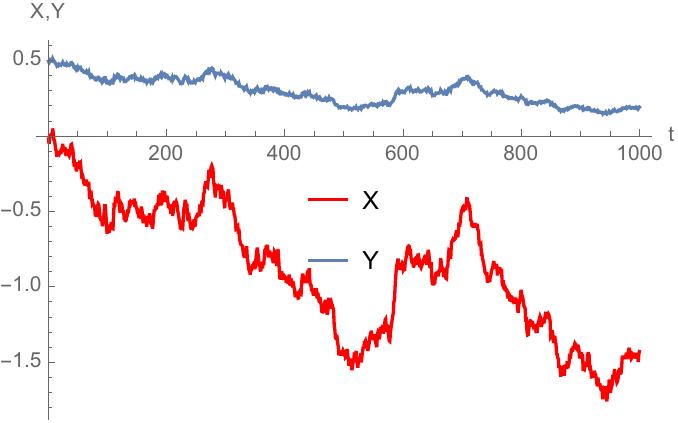}
	\caption{Process and Dual Process }\label{dual}
\end{figure}

$Y_T$ is the terminal value of a process on election day. It lives in $[0,1]$ but can be generalized to the broader $[L,H]$, $L,H \in [0,\infty)$. The threshold for a given candidate to win is fixed at $l$. $Y$ can correspond to raw votes, electoral votes, or any other metric. We assume that $Y_t$ is an intermediate realization of the process at $t$, either produced synthetically from polls (corrected estimates) or other such systems.

Next, we  create, for an unbounded arithmetic stochastic process, a bounded "dual" stochastic process using a sigmoidal transformation. It can be helpful to map processes such as a bounded electoral process to a Brownian motion, or to map a bounded payoff to an unbounded one, see Figure \ref{scheme}.

\begin{proposition}
Under sigmoidal style transformations $S:x \mapsto y, \mathbb{R}\rightarrow [0,1]$ of the form 
a)
$\frac{1}{2} +\frac{1}{2}\text{erf}(x)$,
or 
b) 
$\frac{1}{1+\exp \left(-x\right)}$,
 if $X$ is a martingale, $Y$ is only a martingale for $Y_0=\frac{1}{2}$, and if $Y$ is a martingale, $X$ is only a martingale for $X_0=0$	. 
\end{proposition}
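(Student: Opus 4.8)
The plan is to treat both ``$X$ is a martingale'' and ``$Y$ is a martingale'' as statements about the drift (finite-variation) part of a diffusion, and to show that, because $S$ is a genuine nonlinearity, the two drift conditions are compatible only at the common inflection point. Writing $Y=S(X)$, I would first record the two derivatives of each sigmoid. For case (a), $S'(x)=\frac{1}{\sqrt\pi}e^{-x^2}$ and $S''(x)=-2x\,S'(x)$, so $S''(x)=0$ iff $x=0$; for case (b), $S'(x)=S(x)\bigl(1-S(x)\bigr)$ and $S''(x)=S'(x)\bigl(1-2S(x)\bigr)$, so again $S''(x)=0$ iff $S(x)=\tfrac12$. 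In both cases the unique zero of $S''$ is $x=0$, which maps to $y=\tfrac12$.

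For the first claim I would assume $X$ is driftless, $dX=\sigma\,dW$, and apply It\^o's Lemma to $Y=S(X)$:
\[
dY = \sigma\,S'(X)\,dW + \tfrac12\,\sigma^2\,S''(X)\,dt .
\]
The drift of $Y$ is therefore $\tfrac12\sigma^2 S''(X)$, which vanishes exactly on the set $\{S''(X)=0\}=\{X=0\}=\{Y=\tfrac12\}$. Hence the instantaneous no-drift (martingale) condition holds at the starting instant iff $Y_0=\tfrac12$; for any other starting value $Y$ immediately acquires a drift of definite sign and fails to be a martingale. For case (a) this can be upgraded to an exact statement: since $S(x)=\Phi(\sqrt2\,x)$ and $X_t\sim\mathcal N(X_0,\sigma^2t)$, a direct Gaussian integral gives $\mathbb E[Y_t]=\Phi\!\bigl(\sqrt2\,X_0/\sqrt{1+2\sigma^2t}\bigr)$, which equals $Y_0=\Phi(\sqrt2\,X_0)$ for every $t>0$ iff $X_0=0$, i.e.\ iff $Y_0=\tfrac12$.

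The second claim is the mirror image, obtained either by applying It\^o to $X=S^{-1}(Y)$ under a driftless $Y$ --- the drift of $X$ is then $\tfrac12(S^{-1})''(Y)\,\sigma_Y^2$, and $(S^{-1})''$ vanishes only at the inflection $Y=\tfrac12$, i.e.\ $X=0$ --- or, more concretely, by using the process \eqref{eq:ABM} that makes $Y$ a martingale. Indeed, substituting $dX=\sigma^2X\,dt+\sigma\,dW$ into It\^o's formula and using $S''=-2xS'$ shows the drift of $Y$ cancels identically, while the mean of $X$ solves $\tfrac{d}{dt}\mathbb E[X_t]=\sigma^2\mathbb E[X_t]$, giving $\mathbb E[X_t]=X_0e^{\sigma^2t}$; this is constant in $t$ (so $X$ is a martingale) iff $X_0=0$.

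The step I expect to be the main obstacle is not the computation but fixing the precise meaning of ``martingale'' in the statement. The It\^o-drift computation is immediate, but a genuine martingale requires the drift to vanish along the whole path, not merely at $t_0$; the honest reading is that mean-preservation $\mathbb E[Y_t]=Y_0$ (resp.\ $\mathbb E[X_t]=X_0$) holds for all $t$ precisely at $Y_0=\tfrac12$ (resp.\ $X_0=0$), and away from that point the sign of $S''$ --- strict convexity below the inflection, strict concavity above it --- forces a systematic drift. Making this sharp for the logistic case (b), where $\mathbb E[S(X_t)]$ has no clean closed form, is the one place where I would rely on the definite sign of $S''$ rather than on an explicit integral.
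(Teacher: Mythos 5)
Your proof is correct and follows essentially the same route as the paper: the paper's proof is precisely the It\^o-drift computation you perform, listing the induced drift of $X$ under a driftless $Y$ (namely $\sigma^2 X$ for case (a), $\tfrac12\sigma^2\tanh(X/2)$ for case (b)) and of $Y$ under a driftless $X$ (proportional to $\mathrm{erf}^{-1}(2Y-1)e^{-\mathrm{erf}^{-1}(2Y-1)^2}$ and to $Y(1-Y)(1-2Y)$ respectively), and observing these vanish only at $X_0=0$, i.e.\ $Y_0=\tfrac12$ --- exactly your $S''$/inflection-point argument in different notation. Your additions (the exact Gaussian identity $\mathbb{E}[Y_t]=\Phi\bigl(\sqrt2\,X_0/\sqrt{1+2\sigma^2 t}\bigr)$ for case (a), and the explicit caveat that vanishing drift at $t_0$ is weaker than true martingality, so the statement should be read as mean-preservation / a necessary condition) go beyond the paper's sketch and in fact repair a looseness the paper leaves unaddressed.
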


\begin{proof}

The proof is sketched as follows. From It\^o's lemma, the drift term for  $dX_t$ becomes 1) $ \sigma^2 X(t) $,  or 2) $\frac{1}{2}\sigma ^2\text{Tanh}\left(\frac{X(t)}{2}\right)$, where $\sigma$ denotes the volatility, respectively with transformations of the forms a) of $X_t$ and b) of $X_t$ under a martingale for $Y$.  The drift for $dY_t$  becomes: 1) 
$
 \frac{ \sigma ^2 e^{-\text{erf}^{-1}(2 Y-1)^2} \text{erf}^{-1}(2 Y-1)}{\sqrt{\pi }}
 $
 or 
 2)
 $\frac{1}{2} \sigma ^2 Y  (Y-1) (2 Y-1) $ 
  under a martingale for $X$.
\end{proof}

We therefore select the case of $Y$ being a martingale and present the details of the transformation a). The properties of the process have been developed by Carr \cite{carr2017bounded}. 
Let $X$ be the arithmetic Brownian motion \eqref{eq:ABM}, with $X$-dependent drift and constant scale $\sigma$:
\begin{equation*}
dX_t= \sigma^2 X_t  dt+\sigma dW_t \label{ABM},~~ 0<t<T<+\infty.
\end{equation*}

We note that this has similarities with the Ornstein-Uhlenbeck process normally written $dX_t=\theta  (\mu - X_t) dt + \sigma dW$, except that we have $\mu=0$ and violate the rules by using a negative mean reversion coefficient, rather more adequately described as "mean repelling", $\theta=-\sigma^2$.

We map from $X \in (-\infty, \infty)$ to its dual process  $Y$ as follows.
With $S: \mathbb{R} \rightarrow [0,1]$,	$Y= S(x)$,
\begin{equation*}
	S(x)= \frac{1}{2} + \frac{1}{2}\text{erf}(x)
\end{equation*}
the dual process (by unique transformation since 
$S$ is one to one, becomes, for $ y\triangleq S(x)$, using Ito's lemma (since $S(.)$ is twice differentiable and $\p S/ \p t = 0$):
$$dS=\left(\frac{1}{2} \sigma ^2 \frac{\partial^2 S}{\partial x^2}+ X \sigma^2\frac{\partial S}{\partial x}\right) \mathrm{d}t+\sigma  \frac{\partial S}{\partial x} dW   $$
which with zero drift can be written as a process 
\begin{equation*}
dY_t=s(Y) dW_t, 
\end{equation*}
 for all $t>\tau, \mathbb {E}(Y_t|Y_\tau)=Y_\tau.$ and scale
$$s(Y)=\frac{\sigma  }{\sqrt{\pi }}e^{-\text{erf}^{-1}(2 y-1)^2} $$
which as we can see in Figure \ref{instantvol}, $s(y)$ can be approximated by the quadratic function $y (1-y)$ times a constant.
     
 \begin{figure}[h!]
	\includegraphics[width=0.8 \linewidth]{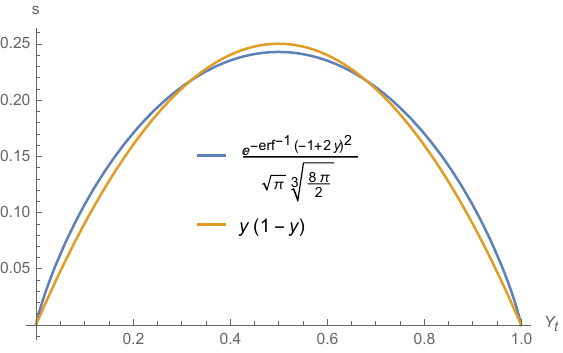}
	\caption{The instantaneous volatility of $Y$ as a function of the level of $Y$ for two different methods of transformations of $X$, which appear to not be substantially different. We compare to the quadratic form $y-y^2$ scaled by a constant $\frac{1}{\sqrt[3]{\frac{8 \pi }{2}}}$. The volatility declines as we move away from $\frac{1}{2}$ and collapses at the edges, thus maintaining $Y$ in $(0,1)$. For simplicity we assumed $\sigma=t=1$. }\label{instantvol}
\end{figure}

   We can recover equation \eqref{eq:ABM} by inverting, namely $S^{-1}(y)=\text{erf}^{-1}(2 y-1)$, and again applying It\^o's Lemma. As a consequence of gauge invariance option prices are identical whether priced on $X$ or $Y$, even if one process has a drift while the other is a martingale. In other words, one may apply one's estimation to the electoral threshold, or to the more complicated $X$ with the same results. And, to summarize our method, pricing an option on $X$ is familiar, as it is exactly a Bachelier-style option price.

\section{Relation to De Finetti's Probability Assessor}

This section provides a brief background for the conventional approach to probability assessment.
De Finetti \cite{de2008philosophical} has shown that the "assessment" of the "probability" of the realization of a random variable in $  \{0,1\}$ requires a nonlinear  loss function -- which makes his definition of \textit{probabilistic assessment} differ from that of the P/L of a trader engaging in binary bets. 

Assume that a betting agent in an $n$-repeated two period model, $t_0$ and $t_1$, produces a strategy $\mathfrak{S}$ of bets $b_{0,i} \in [0,1]$ indexed by $i={1,2,\ldots,n}$, with the realization of the binary r.v. $\mathbbm{1}_{t_1,i}$. If we take the absolute variation of his P/L over $n$ bets, it will be 
$$L_1(\mathfrak{S})=\frac{1}{n}\sum_{i=1}^n\left|\mathbbm{1}_{t_1,i}-b_{t_0,i}\right|.$$ 

For example, assume that $\mathbb{E}(\mathbbm{1}_{t_1})=\frac{1}{2}$. Betting on the probability, here $\frac{1}{2}$, produces a loss of $\frac{1}{2}$ in expectation, which is the same as betting either $0$ or $1$ -- hence not favoring the agent to bet on the exact probability.

If we work with the same random variable and non-time-varying probabilities, the $L^1$ metric would be appropriate:

$$L_1(\mathfrak{S})=\frac{1}{n}\left|\mathbbm{1}_{t_1,i}-\sum_{i=1}^n b_{t_0,i}\right|.$$

De Finetti proposed a "Brier score" type function, a quadratic loss function in $L^2$: $$L_2(\mathfrak{S})=\frac{1}{n}\sum_{i=1}^n(\mathbbm{1}_{t_1,i}-b_{t_0,i})^2,$$ the minimum of which is reached for $b_{t_0,i}= \mathbb{E}(\mathbbm{1}_{t_1})$.

In our world of continuous time derivative valuation, where, in place of a two period lattice model,  we are interested, for the same final outcome at $t_1$, in the stochastic process $b_t$, $t_0\geq t \geq t_1$, the arbitrage "value" of a bet on a binary outcome needs to match the expectation, hence, again, we map to the Brier score -- \textit{by an arbitrage argument}. Although there is no quadratic loss function involved, the fact that the bet is a function of a  martingale, which is required to be itself a  martingale, i.e. that the conditional expectation remains invariant to time, does not allow an arbitrage to take place. A "high" price can be "shorted" by the arbitrageur, a "low" price can be "bought", and so on repeatedly. The consistency between bets at period $t$ and other periods $t+\Delta t$ enforces the probabilistic discipline. In other words, someone can "buy" from the forecaster then "sell" back to him, generating a positive expected "return" if the forecaster is out of line with martingale valuation.

 As to the current practice by forecasters, although some election forecasters appear to be aware of the need to minimize their Brier Score, the idea that the revisions of estimates should also be subjected to martingale valuation is not well established.

\section{Conclusion and Comments}

As can be seen in Figure \ref{graphmain}, a binary option reveals more about uncertainty than about the true estimation, a result well known to traders, see \cite{taleb1997dynamic}.

In the presence of more than $2$ candidates, the process can be generalized with the following heuristic approximation. Establish the stochastic process for $Y_{1,t}$, and just as $Y_{1 ,t}$ is a process in $[0,1]$, $Y_{2,t}$ is a  process  $\in (Y_{1,t},1]$, with $Y_{3,t}$ the residual $1-Y_{2,t}-Y_{1,t}$, and more generally $Y_{n-1,t}\in (Y_{n_2,t},1]$ and $Y_{n,t}$ is the residual $Y_n=1-\sum_{i=1}^{n-1}Y_{i,t}$. For $n $ candidates, the $\text{n}^{th}$ is the residual.

\section{Acknowledgements}

The author thanks Dhruv Madeka and Raphael Douady for
detailed and extensive discussions of the paper as well as thorough
auditing of the proofs across the various iterations, and,
worse, the numerous changes of notation. Peter Carr helped
with discussions on the properties of a bounded martingale and
the transformations. I thank David Shimko, Andrew Lesniewski,
and Andrew Papanicolaou for comments. I thank
Arthur Breitman for guidance with the literature for numerical
approximations of the various logistic-normal integrals. I
thank participants of the Tandon School of Engineering and
Bloomberg Quantitative Finance Seminars. I also thank Bruno
Dupire, MikeLawler, the Editors-In-Chief, and various friendly
people on social media. DhruvMadeka from Bloomberg, while
working on a similar problem, independently came up with the
same relationships between the volatility of an estimate and its
bounds and the same arbitrage bounds. All errors are mine.

\end{document}